\documentclass[aps,floatfix,rmp,reprint,longbibliography,showkeys]{revtex4-1}

\usepackage{amssymb,amsmath,amsthm}
\usepackage{graphicx,color}


\usepackage{xspace}

\newcommand{\ie}{i.e.\ }
\newcommand{\eg}{e.g.,\ }

\newcommand{\ITEMGAP}{\vspace{6pt}}

\newcommand{\NEWAXIOM}[3]{%
  \ITEMGAP
  \begin{description}
  \label{#2}%
  \item[\underline{#1}]{#3}
  \ITEMGAP
  \end{description}
}

\newcommand{\MATH}[1]{\ensuremath{#1}\xspace}
\newcommand{\Tuple}[1]{\MATH{\left\langle{#1}\right\rangle}}
\newcommand{\EMPH}[1]{\textbf{#1}}

\newcommand{\Abs}[1]{\MATH{\left|{#1}\right|}}

\newcommand{\MATHBB}[1]{\MATH{\mathbb{#1}}}

\newcommand{\Rset}{\MATHBB{R}}



\newcommand{\MATHIT}[1]{\MATH{\mathit{#1}}}
\newcommand{\speed}{\MATHIT{v}}
\newcommand{\m}{\MATHIT{m}}

\newcommand{\MATHSF}[1]{\MATH{\mathsf{#1}}}


\newcommand{\inecoll}[4]{\MATHSF{inecoll}_{#1}(#2#3\mathbin{:}#4)}

\newcommand{\Time}{\MATHSF{time}}
\newcommand{\Space}{\MATHSF{space}}

\newcommand{\MATHBF}[1]{\MATH{\mathbf{#1}}}

\newcommand{\vel}{\MATHBF{v}}
\newcommand{\bv}{\MATHBF{v}}
\newcommand{\bp}{\MATHBF{p}}

\newcommand{\BAR}[1]{\MATH{\bar{#1}}}

\newcommand{\vx}{\BAR{x}}
\newcommand{\vy}{\BAR{y}}
\newcommand{\vz}{\BAR{z}}


\newcommand{\leteq}{\MATH{\mathrel{:=}}}

\newcommand{\Ip}{\MATHSF{Ip}}

\newcommand{\IOb}{\MATHSF{IOb}} 

\newcommand{\Q}{\MATHIT{Q}} 
\newcommand{\ax}[1]{\textcolor{axcolor}{\MATHSF{#1}}}

\newcommand{\wl}{\MATH{\mathit{w}\ell}}

\definecolor{firebrick}{rgb}{0.7,0.13,0.13}

\definecolor{thmcolor}{rgb}{0,0,.4} 
\definecolor{remarkcolor}{rgb}{0,.2,0} 
\definecolor{proofcolor}{rgb}{.4,0,0} 
\definecolor{quecolor}{rgb}{.2,.2,0} 
\definecolor{axcolor}{rgb}{.3,0,.3}
\definecolor{thmbgcolor}{rgb}{0.9,0.9,1} 
\definecolor{rmbgcolor}{rgb}{0.9,1,0.9} 
\definecolor{proofbgcolor}{rgb}{1,0.9,0.9}

\definecolor{qcolor}{rgb}{0,0.4,0}
\definecolor{lqcolor}{rgb}{0,0.6,0}
\definecolor{phcolor}{rgb}{.6,0,0}
\definecolor{lphcolor}{rgb}{.7,0,0}
\definecolor{evcolor}{rgb}{.5,.3,.2}
\definecolor{obcolor}{rgb}{0,0.4,.5}
\definecolor{lobcolor}{rgb}{0,0.6,.75}
\definecolor{iobcolor}{rgb}{0,0,.6}
\definecolor{liobcolor}{rgb}{0,0,.7}
\definecolor{axbgcolor}{rgb}{1,.7,1}

\theoremstyle{definition}

\newtheorem{thm}{\textcolor{thmcolor}{Theorem}}[section]

\newtheorem{prop}[thm]{\textcolor{thmcolor}{Proposition}}

\theoremstyle{remark}

\begin{document}

\title{On the Possibility and Consequences of Negative Mass}

\author{J. X. \surname{Madar\'asz}}
\email{madarasz.judit@renyi.mta.hu}
\author{G. \surname{Sz\'ekely}}
\email{szekely.gergely@renyi.mta.hu}
\affiliation{Alfr\'ed R\'enyi Institute of Mathematics of the Hungarian Academy of Sciences, P.O.Box 127, Budapest 1364, Hungary}
\author{M. \surname{Stannett}}
\email{m.stannett@sheffield.ac.uk}
\affiliation{Department of Computer Science, University of Sheffield, Sheffield, S1 4DP, UK}

\date{\today{}}

\begin{abstract}
We investigate the possibility and consequences of the existence of particles having negative relativistic masses, and show that their existence implies the existence of faster-than-light particles (tachyons). Our proof requires only two postulates concerning such particles: that it is possible for particles of any (positive, negative or zero) relativistic mass to collide inelastically with `normal' (\ie positive relativistic mass) particles, and that four-momentum is conserved in such collisions. 
\end{abstract}


\keywords{axiomatic physics; special relativity; dynamics; negative mass; superluminal motion; tachyons; logic.}

\maketitle

\tableofcontents{}

\section{Introduction}

In his well-known relativistic analysis of negative-mass particles, \textcite[p.~428]{Bondi57} successfully constructed a ``world-wide nonsingular solution of Einstein's equations containing two oppositely accelerated pairs of bodies, each pair consisting of two bodies of opposite sign of mass''. Jammer has discussed the historical and philosophical context of negative mass at length. While stressing the fact that no negative-mass particle has yet been observed experimentally, he notes \cite[pp.~129--130]{Jam00} that ``no known physical law precludes the existence of negative masses''. More recently, Bellet\^ete and Paranjape have demonstrated in a general relativistic setting that Schwarzchild solutions exist representing matter distributions which are ``perfectly physical'', despite describing a negative mass Schwarzschild-de Sitter geometry outside the matter distribution \cite{BP13}. 

In this paper, we investigate further whether it is possible for a particle to have negative relativistic mass by considering the formal logical consequences of the existence of such particles. We show formally that if such particles exist, and provided they can collide inelastically (\ie fuse together) with `normal' particles in collisions that conserve four-momentum, then faster-than-light (FTL) particles must also exist. We prove our statement by showing how, given any negative mass particle $a$ with known 4-momentum, it is possible to specify a suitable positive mass particle $b$, such that the inelastic collision of $a$ with $b$ would generate an FTL body.

Our approach is firmly based in an axiomatic logical framework for relativity theory, thereby avoiding the use of unstated and potentially unjustifiable assumptions in deriving our results. Avoiding such assumptions, and in particular the blanket assumption that negative-mass particles cannot exist, is important in this context, since it allows us to provide explanations as to why such phenomena may or may not be physically feasible. In contrast, if we simply assert from the outset that negative mass is unphysical, the only answer we can give to the question ``why?'', is ``because we say so''. For example, it might be argued informally that the entailed existence of FTL particles, proven in this paper, would itself entail the possibility of causality paradoxes, so that the consequences of negative mass particles are not `reasonable'. But informal arguments of this nature can be flawed: using our formal approach, we and our colleagues have recently shown \cite{CQGPaper} that spacetime (of any dimension $n+1$) can be equipped with particles and observers in such a way that faster-than-light motion is possible, but this does \textit{not} lead to the `time travel' situations that give rise to causality problems. Consequently, the fact that negative-mass particles entail the existence of FTL particles cannot, of itself, be used to argue against their existence.

Formal axiomatization also allows us to address consistency issues and what-if scenarios. It is possible to show, for example, that the consistency of relativistic dynamics with interacting particles having negative relativistic masses follows by a straightforward generalization of the  model construction used by \textcite{FTLconsSRDyn} to prove the consistency of relativistic dynamics and interacting FTL particles. The same approach allows us to derive and prove the validity of key relativistic formulae. For example, all inertial observers of any particle must agree on the value of $\m \cdot \sqrt{\Abs{ 1-v^2 }}$, where $\m$ is the particle's relativistic mass and $v$ its speed ($c = 1$), confirming the widely-held belief that both the observed relativistic mass and the observed relativistic momentum of a positive-mass FTL particle must decrease as its relative speed increases \cite{MSS14}. 

We introduce our result in two stages. In Section~\ref{sec: 3 ways}, we show informally that there are several simple ways to create FTL particles using inelastic collisions between positive and negative relativistic mass particles. Then in Section~\ref{sec: formal proof}, we reconstruct our informal arguments within an axiomatic framework so as to provide a complete formal proof of our central claim, that the existence of particles with negative relativistic mass necessarily entails the existence of FTL particles.

Restating and proving our statements formally in this way has a further advantage over the informal approach. The mechanics of proof construction require us to identify all of the tacit assumptions underpinning our informal arguments, thereby revealing which assumptions are relevant and which are unwarranted or unnecessary. This is, fortunately, a task in which automated interactive theorem provers (\eg Isabelle/HOL \cite{Isabelle}) are increasingly able to assist, both in terms of proof production and automatic checking of correctness. This approach is already leading to the production and machine-verification of non-trivial relativistic theorems \cite{SN13}.

\section{Generating FTL particles from negative mass particles} 
\label{sec: 3 ways}

Let us assume that particles do indeed exist with negative relativistic mass, and that it is possible for such particles to collide inelastically with `normal' particles. As we now illustrate informally, the existence of FTL particles (tachyons) follows almost immediately, provided we assume that four-momentum is conserved in such collisions. For simplicity, we take $c = 1$.

Recall first that the \textit{four momentum} of a particle $b$ is the four-dimensional vector $(\m,\bp)$, where \m is its relativistic mass and \bp its linear momentum (as measured by some inertial observer whose identity need not concern us). Notice also that the particle $b$ is a tachyon if and only if $|\m|<|\bp|$ (\ie its observed speed is greater than $c = 1$), and that all inertial observers agree as to this judgment (if one inertial observer considers $b$ to be travelling faster than light, they all do --- this is because all inertial observers consider each other to be travelling slower than light relative to one another. For a machine-verified proof of this assertion, see \cite{SN13}).

In each of the following thought experiments, we will always understand `mass' to mean `relativistic mass'. We will assume the existence of two colliding particles $a$ and $b$, where $a$ has negative mass $\m < 0$ and $b$ has positive mass $M > 0$, which move along the same spatial line (though possibly in opposite directions). Taking the common line of travel to be the $x$-axis, positive in the direction of $b$'s travel, the four-momenta of $a$ and $b$ can be written $(\m,p,0,0)$ and $(M,P,0,0)$, respectively, for suitable values of $p$ and $P$. Assuming that four-momentum is conserved during the collision, the four-momentum of the particle $c$ generated by the fusion of $a$ and $b$ will be $(M+\m,P+p,0,0)$, and this particle will be a tachyon provided 
\begin{equation}
  |M+\m| < |P+p| \label{eqn:thoughts}
\end{equation}
If this tachyon has negative mass and positive momentum, it moves in the negative $x$-direction (it is an unusual property of negative-mass particles that their velocity and momentum vectors point in opposite directions); if it has positive-mass and positive momentum it moves in the positive $x$-direction. By definition, $M > 0 > m$, and $b$ has both positive mass ($M > 0$) and positive momentum ($P > 0$), since its motion defines the positive $x$-direction.

\subsection{First thought experiment}
\label{subsec:thexp1}

Suppose $a$ travels slower than light, while $b$ moves at light-speed, so that the four-momenta of $a$ and $b$ can be written $(m,p,0,0)$ and $(M,M,0,0)$, respectively. According to (\ref{eqn:thoughts}), the particle created by their collision will be a tachyon provided 
\begin{equation}
  |M+m| < |M+p| \label{eqn:thought1}
\end{equation}
There are various ways in which this can happen, depending on the values of \m and $p$ (see Fig.~\ref{fig-thexp1}). Notice that $|p| < |\m|$ since $a$ travels slower than light.

\begin{figure}[htp]
\includegraphics*[width=\linewidth]{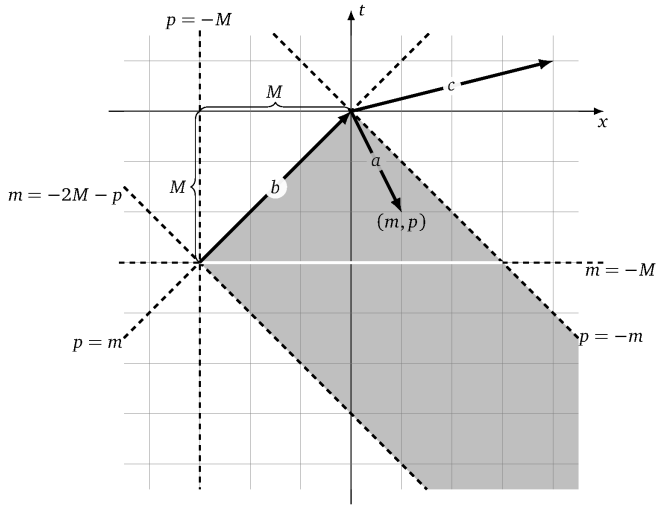}
\caption{\label{fig-thexp1} Illustration for generating an FTL particle by colliding a negative relativistic mass particle with a particle moving with the speed of light.}
\end{figure}

\begin{figure}[htp]
\includegraphics*[width=\linewidth]{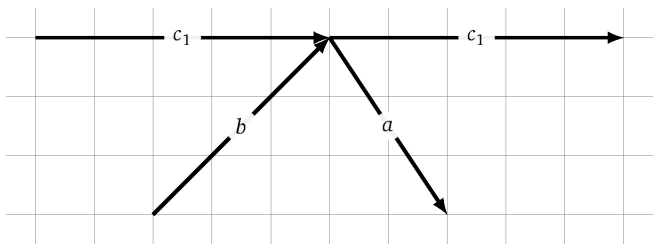}
\caption{\label{fig-ambiguous} The ``inelastic'' collision of two particles having opposite relativistic masses is ambiguous in the sense that in this case we have two possible outcomes satisfying the conservation of four-momentum.}
\end{figure}

\textbf{Case 1.}
If $|\m| > M$, then $|\m+M| = -\m - M$. According to (\ref{eqn:thought1}), a tachyon will be generated provided $-\m - M < |M+p|$. Since $\m+M < 0$ in this case, any resulting tachyon will have negative relativistic mass.

\begin{itemize}
\item
  If $p > -M$, so that $|M+p| = M+p$, then a tachyon forms provided $-\m-M < M+p$, \ie $M > -(\m + p)/2$. Since $M + p > 0$, the resulting tachyon will have positive linear momentum and will consequently move in the negative $x$-direction. The case $p = 0$ corresponds to the collision of a light-speed particle with a stationary negative-mass particle, in which case a tachyon will form provided $M > -\m/2$. See Fig.~\ref{fig-thexp1}.
\item
  If $p \leq -M$, so that $|M+p| = -p-M$, then a tachyon forms provided $-\m-M < -p-M$, \ie $-\m < -p$. This situation is impossible, since $\m$ and $p$ are both non-positive. This means that the requirement can be written $|\m| < |p|$, but we already know that $|p| < |\m|$, since $a$ travels slower than light.
\end{itemize}
  
\textbf{Case 2.}
   If $|\m| < M$, a tachyon is always formed, because:
\begin{itemize}
\item
   if $p \geq 0$, then $|M + p| = M + p$ and (\ref{eqn:thought1}) becomes $M+\m < M + p$, which follows directly from $\m < 0 \leq p$; 
\item
   if $p < 0$, then $|M+p| = M-|p|$ (this is positive because $|p| < |m| < M$) and (\ref{eqn:thought1}) becomes $M+\m < M - |p|$, which follows directly from $-m = |\m| > |p|$, \ie $m < -|p|$.
\end{itemize}
The resulting tachyon $c$ will have positive relativistic mass (because $M > -\m$) and positive linear momentum (because $(M+p) \geq \left(M - |p|\right) > \left(M - |\m|\right) > 0$). It will therefore move in the positive $x$-direction.

\textbf{Case 3.}
The case when $|\m| = M$, \ie $M = -\m$, is ambiguous irrespective of the velocities of the colliding particles $a$ and $b$. Since $M+\m = 0$ and $|p| < |\m| = M$, the linear momentum $M + p$ of particle $c$ must be positive, even though it has zero relativistic mass. In terms of the space-time diagram (Fig.~\ref{fig-ambiguous}), this means that the resulting particle's worldline is horizontal, \ie it `moves' with infinite speed. In these circumstances, the question whether $c$ moves in the positive or negative $x$-direction is meaningless. However, like other observer-dependent concepts such as simultaneity or the temporal ordering of events, this indeterminacy does not lead to a logical contradiction \cite{FTLconsSRDyn}.

\subsection{Second thought experiment}
\label{subsec:thexp2}

Suppose $b$ is stationary, \ie $P = 0$. By arguments similar to those above, this will result in an FTL particle $c$ whenever $|m|+|\bp|>M>|m|-|\bp|$, and its direction of travel will be determinate provided $M\neq -m$. See Fig.~\ref{fig-thexp2}.
     
\begin{figure}[htp]
\includegraphics*[width=\linewidth]{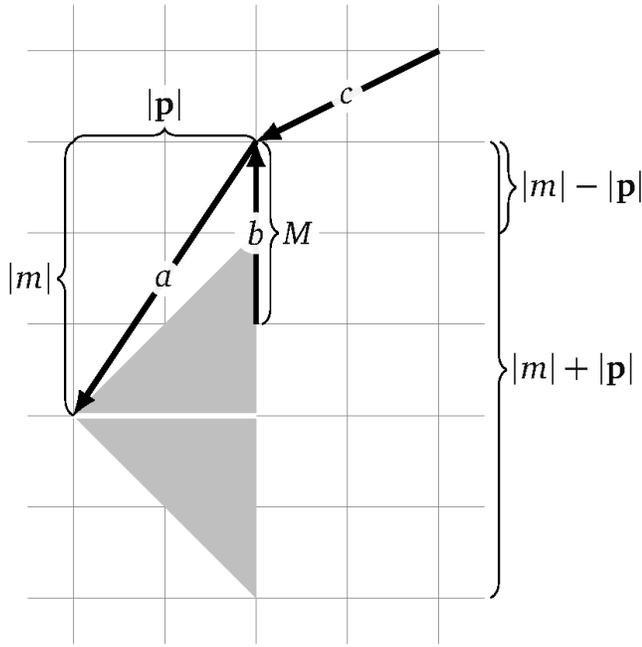}
\caption{\label{fig-thexp2} Illustration for generating an FTL particle by colliding a negative relativistic mass particle with a stationary particle of positive relativistic mass.}
\end{figure}

\subsection{Third thought experiment}
\label{subsec:thexp3}

Suppose $a$ and $b$ have similar, but oppositely-signed, masses, and that they collide `head-on' while travelling with equal speeds in opposite directions (relative to some observer, whose identity need not concern us). If the difference in the absolute values of their masses is small relative to their common speed, the resulting particle will be FTL.

More precisely, suppose $a$ and $b$ have equal speed $v > 0$ when they collide, and that $(1+\epsilon)|\m| < M < (1+2\epsilon)|m|$ for some $\epsilon$ in the range $0 < \epsilon < v$ (\ie $|\m|$ and $M$ are roughly equal). Since $|m| = -m$, it follows that $\epsilon|\m| <  M+m < 2\epsilon|\m|$, and hence that $|M+m| < 2\epsilon|\m|$. Equation (\ref{eqn:thoughts}) now follows, since $|P+p| = |m|v + Mv > (2+\epsilon)|m|v > 2|m|v$, and hence
\begin{equation}
  |P+p| > \left(\frac{|M+m|}{2\epsilon|m|}\right)(2|m|v)
        > |M+m| .
\end{equation}

\subsection{FTL particle creations requiring negative relativistic mass}

We have seen above that the existence of negative-mass particles implies the existence of FTL particles. Conversely, it is easy to see that an inelastic collision between two slower-than-light particles having positive relativistic masses always leads to a slower-than-light particle. Consequently, the only way in which an inelastic collision between slower-than-light particles can create an FTL particle is if incoming particles can have negative relativistic masses. 

In particular, if we impose the condition that such collisions are the only mechanism by which FTL particles can be created, then the existence of FTL particles implies the existence of negative-mass particles. While this suggests that tachyons and negative-mass particles are equally `exotic', this is, of course, not the case, since the argument that FTL particles require the existence of negative-mass particles relies on the assumption that inelastic collisions are the only mechanism by which FTL particles can be created. 

This is no means a trivial assumption; indeed we have demonstrated elsewhere a consistent model of spacetime in which FTL particles exist, but in which no collisions are posited \cite{CQGPaper}.

\section{Axiomatic reconstruction}
\label{sec: formal proof}

An important benefit of reconstructing the previous ideas within an axiomatic framework is that we can localize the required basic assumptions more precisely. Therefore, in this section, we make the intuitive ideas and arguments of the above sections precise in an axiomatic framework. Readers interested in the wider context are referred to \cite{FTLconsSRDyn,SN13,MSS14}.

\subsection{Quantities and Vector Spaces}
To formulate the intuitive image above, we need some structure of numbers describing physical quantities such as coordinates, relativistic masses and momenta. Traditional accounts of relativistic dynamics take for granted that the basic number system to be used for expressing measurements (lengths, masses, speeds, etc.) is the field \Rset of real numbers, but this assumption is far more restrictive than necessary. Instead, we will only assume that the number system is a linearly ordered field $Q$ equipped with the usual constants, zero ($0$) and one ($1$); the usual field operations, addition ($+$), multiplication ($\cdot$) and their inverses; and the usual ordering ($\le$) and its inverse; we also assume that the field is \textit{Euclidean}, \ie positive quantities have square roots. Formally, this is declared as an axiom:

\NEWAXIOM{\ax{AxEField}}{AxEField}
{
  The structure  \Tuple{\Q, 0, 1, +, \cdot, \le}  of quantities is a linearly ordered field (in the algebraic sense) in which all non-negative numbers have square roots, \ie $(\forall x \in \Q)((0 \le x) \Rightarrow (\exists y \in \Q)(x=y^2))$.  
}

We write $\sqrt{x}$ for this root, which can be assumed without loss of generality to be both unique and non-negative (regarding machine-verified proofs of this and other relevant claims concerning Euclidean fields, see \cite{SN13}).

\subsection{Inertial particles and observers}

We denote the set of physical \textit{bodies} (things that can move) by $B$. This includes the sets $\IOb \subseteq B$ of \EMPH{inertial observers}, $\Ip \subseteq B$ of \EMPH{inertial particles}. Given any inertial observer $k \in \IOb$ and inertial particle $b \in \Ip$, we write $\wl_k(b) \subseteq \Q^4$ for the \EMPH{worldline} of particle $b$ as observed by $k$. The coordinates of $\vx\in\Q^n$ are denoted by $x_1$, $x_2$, \ldots, $x_n$.

The following axiom  asserts that the motion of inertial particles are uniform and rectilinear according to inertial observers. 

\NEWAXIOM{\ax{AxIp}}{AxIp}
{
  For all  $k\in\IOb$ and $b\in\Ip$, the worldline $\wl_k(b)$ is either a line, a half-line or a line segment\footnote{%
  Taking $\vx$ and $\vy$ to be of sort $\Q^4$, and $\lambda$ to be of sort \Q, these concepts are defined formally as follows. A \textit{line} is a set of the form $\{ \vz ~|~ (\exists \vx,\vy,\lambda) ( \vz=\lambda \vx+(1-\lambda)\vy ) \}$. A \textit{half-line} is a set of the form $\{ \vz ~|~ (\exists \vx,\vy,\lambda) ((0\le \lambda) \& (\vz=\lambda \vx+(1-\lambda)\vy) )\}$. A \textit{line segment} is a set of the form  $\{ \vz | (\exists \vx,\vy,\lambda) ((0\le \lambda \le 1) \& (\vz=\lambda \vx+(1-\lambda)\vy) ) \}$. }.
}

Suppose observer $k \in \IOb$ sees particle $b \in \Ip$ at the distinct locations $\vx, \vy \in \Q^4$. Then its \EMPH{velocity} according to $k$ is the associated change in spatial component divided by the change in time component,
\[
   \vel_k(b) \leteq \begin{cases}
      \frac{\Space(\vx,\vy)}{\Time(\vx,\vy)} & \text{ if $\Time(\vx,\vy)\neq 0$ } \\
      \mathsf{undefined} & \text{ otherwise }
      \end{cases}
\]
where $\Space(\vx,\vy)\leteq (x_2-y_2,x_3-y_3,x_4-y_4)$ and $\Time(\vx,\vy)\leteq x_1-y_1$. The length\footnote{The \EMPH{Euclidean length}, $|\vx|$, of a vector $\vx$ is the non-negative quantity $|\vx| = \sqrt{ x_1^2 + \dots + x_n^2  }$.} of the velocity vector (if it is defined) is the particle's \EMPH{speed},
\[
 \speed_k(b) \leteq |\vel_k(b)| .
\]
 By \ax{AxIp}, these concepts are well-defined because $\wl_k(b)$ lies in a straight line, so that all choices of distinct \vx and \vy give the same results.

If $\vel_k(b)$ is defined, we say that $b$ is observed by $k$ to have \EMPH{finite speed}, and write $\speed_k(b) < \infty$. The anomalous case $\Time(\vx,\vy) = 0$ corresponds to a situation where all points in $\wl_k(b)$ are simultaneous from $k$'s point of view, so that $k$ considers the particle to require no time at all to travel from one spatial location to another.

\subsection{Collision axioms}

In this subsection, we introduce some very simple axioms concerning the dynamics of collisions, and show that the existence of negative relativistic mass implies the existence of faster-than-light (FTL) inertial particles.

Suppose an inertial observer $k$ sees two inertial bodies traveling at finite speed fuse to form a third one at some point \vx. In this case, the worldlines of the two incoming particles terminate at \vx, while that of the outgoing particle originates there. Formally, we say that an inertial particle $b$ is \EMPH{incoming} at \vx (according to $k$) provided $\vx \in \wl_k(b)$ and \vx occurs strictly later than any other point on $\wl_k(b)$, \ie $\vy \in \wl_k(b) \enskip\&\enskip \vy \neq \vx \;\Rightarrow\; y_1 < x_1$. \EMPH{Outgoing} bodies are defined analogously. An \EMPH{inelastic collision} between two inertial particles $a$ and $b$ (according to observer $k$) is then a scenario in which there is a unique additional particle $c \in \Ip$ and a point \vx such that $a$ and $b$ are incoming at $\vx$, $c$ is outgoing at $\vx$. We write  $\inecoll kabc$ to denote that the distinct inertial particles $a$ and $b$ \EMPH{collide inelastically}, thereby generating inertial particle $c$ (according to observer $k$). The \EMPH{relativistic mass} of inertial particle $b$ according to observer $k$ is denoted by $\m_k(b)$.

\NEWAXIOM{\ax{ConsFourMomentum}}{ConsFourMomentum}{Four-momentum is conserved in inelastic collisions of inertial particles according to inertial observers, \ie
\[\begin{aligned}
&\quad \inecoll kabc \Rightarrow \\
&\quad\quad \m_k(c)=\m_k(a)+\m_k(b)  \quad \&\\
&\quad\quad\quad \m_k(c)\vel_k(c)=\m_k(a)\vel_k(a)+\m_k(b)\vel_k(b) 
\end{aligned}\]
}

The next axiom, \ax{AxInecoll}, states that inertial particles moving with finite speeds can be made to collide inelastically in any frame in which their relativistic masses are not equal-but-opposite. Since a collision of particles having equal but opposite relativistic masses does not lead to an inelastic collision according to our formal definition, we do not include this case in this axiom (this does not mean that such particles cannot collide, just that such a collision will not comply with our definition of inelasticity in the associated frame because the third participating particle has infinite speed).

\begin{figure}[htp]
\includegraphics*[width=\linewidth]{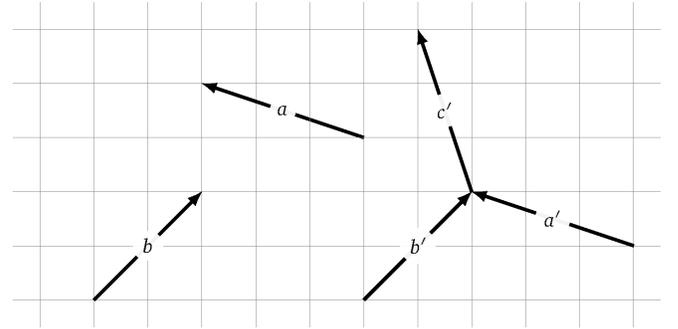}
\caption{\label{fig-axinecoll} Illustration for axiom \ax{AxInecoll}}
\end{figure}

\NEWAXIOM{\ax{AxInecoll}}{axinecoll}
{
If $k\in\IOb$ and  $a,b\in\Ip$ such that $\speed_k(a)<\infty$, $\speed_k(b)<\infty$ and $\m_k(a) + \m_k(b)\neq 0$, then there are $a',b'\in\Ip$ such that $a'$ and $b'$ collide inelastically, with $\m_k(a')=\m_k(a)$, $\vel_k(a')=\vel_k(a)$, $\m_k(b')=\m_k(b)$ and $\vel_k(b')=\vel_k(b)$.\footnote{Because here we use the framework of \cite{Dyn-studia}, we cannot distinguish actual and potential particles only existing in some thought experiments.  See \cite{MoSz13} for an axiomatic framework where this distinction can be made and the idea of thought experiments can be captured formally.} See Fig.~\ref{fig-axinecoll}.
}

\subsection{Formulating the thought experiments}

Here we are going to formalize and prove the thought experiments of Subsections~\ref{subsec:thexp1}, \ref{subsec:thexp2} and \ref{subsec:thexp3}.

Formula \ax{\exists NegMass} below says that there is at least one inertial particle of finite speed and negative relativistic mass.

\NEWAXIOM{\ax{\exists NegMass}}{NegMass}{There are $k\in\IOb$ and $b\in\Ip$ such that $\m_k(a)<0$ and $\speed_k(a)<\infty$.}

Formula \ax{\exists FTL Ip} below says that there is at least one faster than
light inertial particle.
\NEWAXIOM{\ax{\exists FTL Ip}}{FTL Ip}{There are  $k\in\IOb$ and  $b\in\Ip$ such that $1<\speed_k(b)<\infty$.}

Axiom \ax{AxThExp_1} below says that the thought experiment described in Subsection~\ref{subsec:thexp1} can be done by asserting that inertial observers can send out particles moving with the speed of light $1$ in any direction every where having arbitrary positive relativistic mass. 
 \NEWAXIOM{\ax{AxThExp_1}}{ThExp1}{For $k\in\IOb$, $m\in\Q$ and $\bv\in\Q^3$ for which $m>0$ and  $|\bv|=1$, there is $b\in\Ip$ such that $\vel_k(b)=\bv$ and $\m_k(b)=m$.}

\begin{prop}
\label{prop1}
Assume \ax{ConsFourMomentum}, \ax{AxEField}, \ax{AxIp}, 
\ax{AxInecoll}, \ax{AxThExp_1}. Then
\begin{equation}
\label{impeq2}
\ax{\exists NegMass}\ \Rightarrow\ \ax{\exists FTL\Ip}.
\end{equation}
\end{prop}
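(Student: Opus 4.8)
The plan is to formalize the first thought experiment of Subsection~\ref{subsec:thexp1} in the simplest sub-case: collide the given negative-mass particle head-on with a light-speed particle whose positive relativistic mass we are free to pick as large as we like. A pleasant by-product is that we never need the negative-mass particle itself to be slower than light.

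First I would unpack \ax{\exists NegMass}: fix $k\in\IOb$ and $a\in\Ip$ with $\m\de\m_k(a)<0$ and $v_a\de\speed_k(a)<\infty$, and put $\bv_a\de\vel_k(a)$, so $|\bv_a|=v_a$ (all well-defined by \ax{AxIp} and \ax{AxEField}). Choose a unit vector $\bv_b\in\Q^3$ opposite to $\bv_a$ — take $\bv_b\de-\bv_a/v_a$ if $v_a\neq0$ and $\bv_b\de(1,0,0)$ if $v_a=0$ — and choose $M\in\Q$ with $M>-\m$, say $M\de1-\m$; then $M>0$, $M>|\m|$, and $\m+M>0$. Applying \ax{AxThExp_1} to $k,M,\bv_b$ (legitimate since $M>0$ and $|\bv_b|=1$) yields $b\in\Ip$ with $\vel_k(b)=\bv_b$ and $\m_k(b)=M$, hence $\speed_k(b)=1<\infty$.

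Next, because $\speed_k(a)<\infty$, $\speed_k(b)<\infty$ and $\m_k(a)+\m_k(b)=\m+M\neq0$, axiom \ax{AxInecoll} supplies $a',b'\in\Ip$ that collide inelastically producing some $c\in\Ip$ with $\m_k(a')=\m$, $\vel_k(a')=\bv_a$, $\m_k(b')=M$, $\vel_k(b')=\bv_b$, i.e. $\inecoll k{a'}{b'}c$ holds. Then \ax{ConsFourMomentum} gives $\m_k(c)=\m+M\neq0$ and $\m_k(c)\,\vel_k(c)=\m\bv_a+M\bv_b$; since $\m_k(c)\neq0$ this determines $\vel_k(c)=(\m\bv_a+M\bv_b)/\m_k(c)$, so $\speed_k(c)<\infty$. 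As $\bv_b$ is anti-parallel to $\bv_a$ and $\m<0<M$ we get $|\m\bv_a+M\bv_b|=|\m|\,v_a+M$, whence
\[
  \speed_k(c)=\frac{|\m|\,v_a+M}{|\m+M|}=\frac{|\m|\,v_a+M}{M+\m}.
\]
Because $M+\m>0$, the inequality $\speed_k(c)>1$ is equivalent to $|\m|\,v_a+M>M+\m$, i.e. to $\m<|\m|\,v_a$, which is immediate since $\m<0\le|\m|\,v_a$. Hence $1<\speed_k(c)<\infty$, so $c$ witnesses \ax{\exists FTL\Ip}, which is (\ref{impeq2}).

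The algebra is short; the points I would be most careful about are: (i) choosing $M$ strictly above $|\m|$, not merely nonzero — this is exactly what makes $\m_k(c)\neq0$, so that $c$ genuinely has a finite, well-defined speed rather than the ``infinite-speed'', direction-ambiguous behaviour of Case~3 of Subsection~\ref{subsec:thexp1}; (ii) the step using the momentum equation of \ax{ConsFourMomentum} together with $\m_k(c)\neq0$ to conclude that $\vel_k(c)$ is defined, which in the partial-function framework of \cite{Dyn-studia} deserves an explicit word (an outgoing particle cannot have an instantaneous worldline); and (iii) the harmless degeneracy $v_a=0$, where ``opposite direction'' is replaced by an arbitrary unit vector and the final inequality survives unchanged. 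I expect (ii) to be the only place where a careful reader might ask for more detail.
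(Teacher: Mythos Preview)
Your proof is correct and follows essentially the same route as the paper: collide the negative-mass particle with a light-speed particle moving in the opposite direction, with positive mass chosen strictly above $|\m|$. The only cosmetic difference is the specific choice of $M$ --- you take $M=1-\m$ while the paper takes $M=-2\m_k(a)$ --- both yielding $\m_k(c)>0$ and $\speed_k(c)>1$ by the same momentum calculation.
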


\begin{proof}
By axiom \ax{\exists NegMass}, there is an inertial observer $k$ and inertial particle $a$ such that  $\m_k(a)<0$ and $\speed_k(a)<\infty$. Let $\bv\in\Q^3$ for which $|\bv|=1$. Then by axiom \ax{AxThExp_1}, there is an inertial particle $b$ such that $\m_k(b)=-2\m_k(a)$ and 
\begin{equation*}
\vel_k(b)=
\begin{cases}
\bv & \text{if } \speed_k(a)=0,\\
\frac{-\vel_k(a)}{\speed_k(a)} & \text{if } \speed_k(a)\neq 0.
\end{cases}
\end{equation*}
By axiom \ax{AxInecoll}, there are inelastically colliding inertial particles $a'$, $b'$ and $c'$ such that $\inecoll k{a'}{b'}{c'}$, $\m_k(a')=\m_k(a)$, $\vel_k(a')=\vel_k(a)$, $\m_k(b')=\m_k(b)$ and $\vel_k(b')=\vel_k(b)$. By \ax{ConsFourMomentum},
\begin{equation}\begin{aligned}
\m_k(c')&=\m_k(a') + \m_k(b')\\
        &=\m_k(a)+\m_k(b)  
         = -\m_k(a)
\end{aligned}\end{equation}
and   
\begin{equation}
\m_k(c')\vel_k(c') = \begin{cases}
- 2\m_k(a)\bv & \text{if $\speed_k(a)=0$},\\
\m_k(a)\vel_k(a)+2\m_k(a)\frac{\vel_k(a)}{\speed_k(a)} & \text{if $\speed_k(a)\neq 0$}.
\end{cases}
\end{equation}
Hence 
\begin{equation}
\vel_k(c')=
\begin{cases}
2\bv  & \text{if } \speed_k(a)=0,\\
-(\speed_k(a)+2) \frac{\vel_k(a)}{\speed_k(a)}  & \text{if } \speed_k(a)\neq 0.
\end{cases}
\end{equation}
Therefore, $\speed_k(c')=|\vel_k(c')|>1$ and $\speed_k(c')<\infty$; and this is what we wanted to prove. 
\end{proof}

Axiom \ax{AxThExp_2} below ensures the existence of the particle having positive relativistic mass used in the thought experiment described in Subsection~\ref{subsec:thexp2}. 

\NEWAXIOM{\ax{AxThExp_2}}{ThExp2}{For every  $k\in\IOb$ and $m>0$, there is $b\in\Ip$ such that $\speed_k(b)=0$ and $\m_k(b)=m$.}

Formula \ax{\exists MovNegMass} below asserts that there is at least one moving inertial particle of finite speed and negative relativistic mass. 
\NEWAXIOM{\ax{\exists MovNegMass}}{MovNegMass}{There are $k\in\IOb$ and $b\in\Ip$ such that $\m_k(b)<0$ and $0<\speed_k(b)<\infty$.}
For the sake of economy, we use axiom \ax{\exists MovNegMass} instead of \ax{\exists NegMass} because in this case we do not have to assume anything about the possible motions of inertial observers or the transformations between their worldviews. We note, however, that these two axioms are clearly equivalent in both Newtonian and relativistic kinematics (assuming that inertial observers can move with respect to each other).

\begin{prop}
\label{prop2}
Assume \ax{ConsFourMomentum}, \ax{AxEField}, \ax{AxIp},
 \ax{AxInecoll}, \ax{AxThExp_2}. Then
\begin{equation}
\label{impeq1}
\ax{\exists MovNegMass}\ \Rightarrow\ \ax{\exists FTL\Ip}.
\end{equation}
\end{prop}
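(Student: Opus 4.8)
The plan is to formalize the second thought experiment (Subsection~\ref{subsec:thexp2}) in exactly the way the proof of Proposition~\ref{prop1} formalizes the first: use a \emph{stationary} positive-mass particle --- supplied by \ax{AxThExp_2} --- as the collision partner of the given negative-mass particle, and then read the speed of the collision product off from \ax{ConsFourMomentum}. First I would apply \ax{\exists MovNegMass} to fix an inertial observer $k$ and an inertial particle $a$ with $m := \m_k(a) < 0$ and $0 < v := \speed_k(a) < \infty$, and set $\bv := \vel_k(a)$, so that $|\bv| = v$. Unlike in Proposition~\ref{prop1}, the whole argument then lives inside the single frame $k$: no direction for the auxiliary particle needs to be chosen, and nothing about the motions of, or the transformations between, inertial observers is required --- which is precisely why \ax{\exists MovNegMass} is convenient here in place of \ax{\exists NegMass}.

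Next I would construct the positive-mass collision partner and collide it with $a$. Applying \ax{AxThExp_2} to $k$ and the mass $M := -m\,(1+v/2)$, which is positive because $-m>0$ and $v>0$, yields an inertial particle $b$ with $\speed_k(b)=0$ --- hence $\vel_k(b)=\mathbf{0}$ --- and $\m_k(b)=M$. Since $\speed_k(a)<\infty$, $\speed_k(b)=0<\infty$ and $\m_k(a)+\m_k(b)=m+M=-mv/2\neq 0$, axiom \ax{AxInecoll} supplies inertial particles $a'$, $b'$, $c'$ with $\inecoll k{a'}{b'}{c'}$, $\m_k(a')=m$, $\vel_k(a')=\bv$, $\m_k(b')=M$ and $\vel_k(b')=\mathbf{0}$. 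Then \ax{ConsFourMomentum} gives $\m_k(c')=m+M=-mv/2$, which is nonzero --- so $\vel_k(c')$ is genuinely defined --- together with $\m_k(c')\,\vel_k(c')=m\bv+M\cdot\mathbf{0}=m\bv$; dividing, $\vel_k(c')=\frac{m\bv}{-mv/2}=-\frac{2}{v}\,\bv$, so $\speed_k(c')=\frac{2}{v}\,|\bv|=2$. Since $1<\speed_k(c')<\infty$, the particle $c'$ witnesses \ax{\exists FTL\Ip}, as required.

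I expect the one genuinely delicate point to be the choice of the auxiliary mass $M$, not the computation. It must simultaneously (i)~be positive, to be a legal input to \ax{AxThExp_2}; (ii)~differ from $-m$, so that \ax{AxInecoll} applies at all and the product $c'$ avoids the ``infinite-speed'' equal-and-opposite-mass situation of Case~3 of the informal discussion (where $\m_k(c')=0$); and (iii)~satisfy the tachyon inequality $|M+m|<|m|\,v$ of~(\ref{eqn:thoughts}). The value $M=-m(1+v/2)$ achieves all three at once and, pleasantly, makes $\speed_k(c')$ come out to exactly $2$ independently of $v$; but any positive $M$ in the open interval $\bigl(|m|(1-v),\,|m|(1+v)\bigr)$ with $M\neq|m|$ would serve equally well, which is exactly the informal condition $|m|-|m|v<M<|m|+|m|v$ of Subsection~\ref{subsec:thexp2}. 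Finally, no case split on whether $a$ is itself already faster than light is needed: the construction goes through verbatim for every $v\in(0,\infty)$.
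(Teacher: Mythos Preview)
Your proof is correct and follows essentially the same route as the paper's own argument: the same choice of auxiliary mass $M=-\m_k(a)\bigl(1+\speed_k(a)/2\bigr)$, the same application of \ax{AxInecoll} and \ax{ConsFourMomentum}, and the same conclusion $\speed_k(c')=2$. If anything, you are slightly more careful than the paper in verifying the hypotheses of \ax{AxInecoll} and the well-definedness of $\vel_k(c')$.
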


\begin{proof}
By axiom \ax{\exists MovNegMass}, there is an inertial observer $k$ and inertial particle $a$ such that  $\m_k(a)<0$ and $0<\speed_k(a)<\infty$.  By axiom \ax{AxThExp_2}, there is an inertial particle $b$ such that $\m_k(b)=-\m_k(a)\left(1+\speed_k(a)/2\right)$ and $\speed_k(b)=0$. By axiom \ax{AxInecoll}, there are inelastically colliding inertial particles $a'$, $b'$ and $c'$ such that $\inecoll k{a'}{b'}{c'}$, $\m_k(a')=\m_k(a)$, $\vel_k(a')=\vel_k(a)$, $\m_k(b')=\m_k(b)$ and $\vel_k(b')=\vel_k(b)$. By \ax{ConsFourMomentum}, 
\begin{equation}\begin{aligned}
\m_k(c') &=\m_k(a')+ \m_k(b') \\
         &=\m_k(a)+ \m_k(b) 
          =\frac{-\m_k(a)\speed_k(a)}{2} 
\end{aligned}\end{equation}
and     
\begin{equation}
\m_k(c')\vel_k(c')=\m_k(a)\vel_k(a).
\end{equation}
It follows that
\[
   \vel_k(c')=-2\frac{\vel_k(a)}{\speed_k(a)},
\] 
and hence that $\speed_k(c') = 2 > 1$, which is what we wanted to prove. 
\end{proof}  

Finally let us introduce the following axiom ensuring the existence of the particles having positive relativistic mass needed in the thought experiment of Subsection~\ref{subsec:thexp3}.
\NEWAXIOM{\ax{AxThExp_3}}{ThExp3}{For all $\varepsilon>0$, $k\in\IOb$ and $a\in\Ip$, there is $b\in\Ip$ such that $(1+\varepsilon)|\m_k(a)|<\m_k(b)<(1+2\varepsilon)|\m_k(a)|$ and $\vel_k(a)=-\vel_k(b)$.}

\begin{prop}
\label{prop3}
Assume \ax{ConsFourMomentum}, \ax{AxEField}, \ax{AxIp}, 
\ax{AxInecoll}, \ax{AxThExp_3}. Then
\begin{equation}
\label{impeq3}
\ax{\exists MovNegMass}\ \Rightarrow\ \ax{\exists FTL\Ip}.
\end{equation}
\end{prop}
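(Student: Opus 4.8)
The plan is to reproduce the reasoning of the third thought experiment inside the axiomatic framework, in exactly the style of the proofs of Propositions~\ref{prop1} and~\ref{prop2}. First I would invoke \ax{\exists MovNegMass} to fix an inertial observer $k$ and an inertial particle $a$ with $m:=\m_k(a)<0$ and $v:=\speed_k(a)$ satisfying $0<v<\infty$; note $|m|=-m>0$. Since $v>0$, I may choose $\varepsilon$ with $0<\varepsilon<v$ (concretely $\varepsilon:=v/2$) and apply \ax{AxThExp_3} with this $\varepsilon$, $k$ and $a$ to obtain an inertial particle $b$ with $M:=\m_k(b)$ satisfying $(1+\varepsilon)|m|<M<(1+2\varepsilon)|m|$ and $\vel_k(b)=-\vel_k(a)$. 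Then $M>(1+\varepsilon)|m|>0$, and since $|m|=-m$ the lower bound gives $M+m>\varepsilon|m|>0$; in particular $\m_k(a)+\m_k(b)\neq 0$. Also $\vel_k(b)=-\vel_k(a)$ is a well-defined vector, so $\speed_k(b)=v<\infty$.

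Next I would apply \ax{AxInecoll} — whose hypotheses $\speed_k(a)<\infty$, $\speed_k(b)<\infty$ and $\m_k(a)+\m_k(b)\neq 0$ are now all in place — to obtain inertial particles $a',b',c'$ with $\inecoll k{a'}{b'}{c'}$, $\m_k(a')=m$, $\vel_k(a')=\vel_k(a)$, $\m_k(b')=M$ and $\vel_k(b')=-\vel_k(a)$. Applying \ax{ConsFourMomentum} then yields
\[
  \m_k(c')=M+m\ (>0),\qquad \m_k(c')\,\vel_k(c')=m\,\vel_k(a)+M(-\vel_k(a))=(m-M)\,\vel_k(a).
\]
Since $\m_k(c')=M+m\neq 0$ I may divide, so $\vel_k(c')=\tfrac{m-M}{M+m}\,\vel_k(a)$ is a well-defined vector, whence $\speed_k(c')<\infty$ and $\speed_k(c')=\tfrac{M-m}{M+m}\,v=\tfrac{M+|m|}{M+m}\,v$.

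It remains to check $\speed_k(c')>1$, \ie $(M+|m|)v>M+m$, which is just the inequality chase of Subsection~\ref{subsec:thexp3}. From $M<(1+2\varepsilon)|m|$ and $|m|=-m$ one gets $M+m<2\varepsilon|m|=|m|v$ (using $2\varepsilon=v$), and from $M>|m|$ one gets $M+|m|>2|m|$, so $(M+|m|)v>2|m|v>|m|v>M+m$, using $v,|m|>0$. Therefore $1<\speed_k(c')<\infty$, so $c'$ witnesses \ax{\exists FTL\Ip}, as required. I do not anticipate any genuine obstacle: the argument is structurally identical to Propositions~\ref{prop1} and~\ref{prop2}, and the only points needing care are that the choice of $\varepsilon$ in $(0,v)$ is legitimate — this is precisely where the \textit{moving} clause of \ax{\exists MovNegMass} (rather than \ax{\exists NegMass}) is used — and the bookkeeping of the absolute values $|m|$ and $|M+m|$ when the informal inequalities are transcribed into the ordered field $\Q$.
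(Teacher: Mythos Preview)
Your proposal is correct and follows essentially the same route as the paper's proof: invoke \ax{\exists MovNegMass}, choose $0<\varepsilon<\speed_k(a)$, apply \ax{AxThExp_3}, then \ax{AxInecoll}, then \ax{ConsFourMomentum}, and finish with the inequality chase from Subsection~\ref{subsec:thexp3}. The only cosmetic differences are that you fix the specific value $\varepsilon=v/2$ (the paper leaves $\varepsilon$ arbitrary in $(0,v)$ and concludes $\speed_k(c')>\speed_k(a)/\varepsilon>1$) and that you check the hypothesis $\m_k(a)+\m_k(b)\neq 0$ of \ax{AxInecoll} more explicitly than the paper does.
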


\begin{proof}
By axiom \ax{\exists MovNegMass}, there is an inertial observer $k$ and inertial particle $a$ such that  $\m_k(a)<0$ and $0<\speed_k(a)<\infty$. Let $0<\varepsilon<\speed_k(a)$.  Then by axiom \ax{AxThExp_3}, there is an inertial particle $b$ such that $(1+\varepsilon)|\m_k(a)|<\m_k(b)<(1+2\varepsilon)|\m_k(a)|$ and $\vel_k(b)=-\vel_k(a)$.  

\begin{figure}[htp]
\includegraphics*[width=\linewidth]{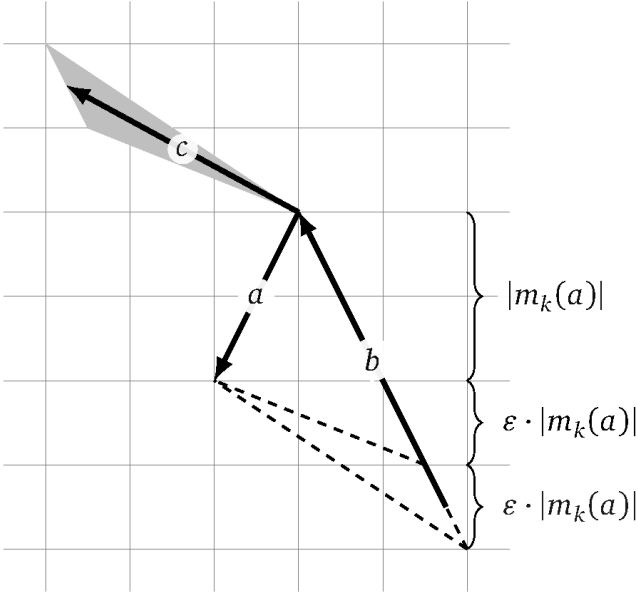}
\caption{\label{fig-thexp3} Illustration for the proof of Proposition~\ref{prop3} }
\end{figure}

By axiom \ax{AxInecoll}, there are inelastically colliding inertial particles $a'$, $b'$ and $c'$ such that $\inecoll k{a'}{b'}{c'}$, $\m_k(a')=\m_k(a)$, $\vel_k(a')=\vel_k(a)$, $\m_k(b')=\m_k(b)$ and $\vel_k(b')=\vel_k(b)$. By \ax{ConsFourMomentum}, 
\begin{equation}
\varepsilon |\m_k(a)|  <  |\m_k(c')|  <  2\varepsilon |\m_k(a)|
\end{equation}
and     
\begin{equation}
2|\m_k(a)|\speed_k(a) < (2+\varepsilon)|\m_k(a)|\speed_k(a)  <  |\m_k(c')\vel_k(c')|.
\end{equation}
Hence 
\begin{equation}
\speed_k(c')=|\vel_k(c')|>\frac{2|\m_k(a)|\speed_k(a)}{2\varepsilon |\m_k(a)|}>\frac{\speed_k(a)}{\varepsilon}.
\end{equation}
Therefore, $1<\speed_k(c')<\infty$; and this is what we wanted to prove. 
\end{proof}

\section{Concluding remarks}

Using only basic postulates concerning the conservation of four-momentum, we have axiomatically shown that the existence of particles having negative relativistic masses implies the existence of FTL particles. The following are the two most straightforward applications of this result.
\begin{itemize}
\item If experiment eventually shows the existence of particles having negative masses, then we will know that FTL particles must also exist. If evidence exists suggesting otherwise, our approach would then imply that one or more of the natural assumptions encoded in our axioms must to be false. This in turn would provide information suitable for guiding further experimentation.
\item Similarly, if we can prove that FTL particles cannot exist, and \emph{no} evidence can be found suggesting that the natural physical assumptions encoded by our axioms are invalid, then this can be used to prove the non-existence of particles having negative masses.
\end{itemize}

It is also worth noting that our axioms are so general that they are compatible with both Newtonian and relativistic kinematics, so that our method can be used to derive predictions for both settings. Moreover, we have made no restrictions on the worldview transformations between inertial observers. A benefit of being so parsimonious with the basic assumptions is that it makes results obtained using our axiomatic method that much more difficult to challenge, because so few basic assumptions have been made concerning physical behaviours in the ``real world''.

\section*{Acknowledgments}
This research was partially supported under the Royal Society International Exchanges Scheme (ref. IE110369) and by the Hungarian Scientific Research Fund for basic research grants No. T81188 and No. PD84093, as well as by a Bolyai grant for J. X. Madar\'asz.

\bibliography{LogRelBib}

\begin{thebibliography}{10}%
\makeatletter
\providecommand \@ifxundefined [1]{%
 \@ifx{#1\undefined}
}%
\providecommand \@ifnum [1]{%
 \ifnum #1\expandafter \@firstoftwo
 \else \expandafter \@secondoftwo
 \fi
}%
\providecommand \@ifx [1]{%
 \ifx #1\expandafter \@firstoftwo
 \else \expandafter \@secondoftwo
 \fi
}%
\providecommand \natexlab [1]{#1}%
\providecommand \enquote  [1]{``#1''}%
\providecommand \bibnamefont  [1]{#1}%
\providecommand \bibfnamefont [1]{#1}%
\providecommand \citenamefont [1]{#1}%
\providecommand \href@noop [0]{\@secondoftwo}%
\providecommand \href [0]{\begingroup \@sanitize@url \@href}%
\providecommand \@href[1]{\@@startlink{#1}\@@href}%
\providecommand \@@href[1]{\endgroup#1\@@endlink}%
\providecommand \@sanitize@url [0]{\catcode `\\12\catcode `\$12\catcode
  `\&12\catcode `\#12\catcode `\^12\catcode `\_12\catcode `\%12\relax}%
\providecommand \@@startlink[1]{}%
\providecommand \@@endlink[0]{}%
\providecommand \url  [0]{\begingroup\@sanitize@url \@url }%
\providecommand \@url [1]{\endgroup\@href {#1}{\urlprefix }}%
\providecommand \urlprefix  [0]{URL }%
\providecommand \Eprint [0]{\href }%
\providecommand \doibase [0]{http://dx.doi.org/}%
\providecommand \selectlanguage [0]{\@gobble}%
\providecommand \bibinfo  [0]{\@secondoftwo}%
\providecommand \bibfield  [0]{\@secondoftwo}%
\providecommand \translation [1]{[#1]}%
\providecommand \BibitemOpen [0]{}%
\providecommand \bibitemStop [0]{}%
\providecommand \bibitemNoStop [0]{.\EOS\space}%
\providecommand \EOS [0]{\spacefactor3000\relax}%
\providecommand \BibitemShut  [1]{\csname bibitem#1\endcsname}%
\let\auto@bib@innerbib\@empty
\bibitem [{\citenamefont {Andr{\'e}ka}\ \emph {et~al.}(2014)\citenamefont
  {Andr{\'e}ka}, \citenamefont {Madar{\'a}sz}, \citenamefont {N{\'e}meti},
  \citenamefont {Stannett},\ and\ \citenamefont {Sz{\'e}kely}}]{CQGPaper}%
  \BibitemOpen
  \bibfield  {author} {\bibinfo {author} {\bibnamefont {Andr{\'e}ka},
  \bibfnamefont {H}}, \bibinfo {author} {\bibfnamefont {J.~X.}\ \bibnamefont
  {Madar{\'a}sz}}, \bibinfo {author} {\bibfnamefont {I.}~\bibnamefont
  {N{\'e}meti}}, \bibinfo {author} {\bibfnamefont {M.}~\bibnamefont
  {Stannett}}, \ and\ \bibinfo {author} {\bibfnamefont {G.}~\bibnamefont
  {Sz{\'e}kely}}} (\bibinfo {year} {2014}),\ \bibfield  {title} {\enquote
  {\bibinfo {title} {{Faster than light motion does not imply time travel}},}\
  }\href {\doibase 10.1088/0264-9381/31/9/095005} {\bibfield  {journal}
  {\bibinfo  {journal} {Classical and Quantum Gravity}\ }\textbf {\bibinfo
  {volume} {31}}~(\bibinfo {number} {9}),\ \bibinfo {pages}
  {095005}}\BibitemShut {NoStop}%
\bibitem [{\citenamefont {Andr{\'e}ka}\ \emph {et~al.}(2008)\citenamefont
  {Andr{\'e}ka}, \citenamefont {Madar{\'a}sz}, \citenamefont {N{\'e}meti},\
  and\ \citenamefont {Sz{\'e}kely}}]{Dyn-studia}%
  \BibitemOpen
  \bibfield  {author} {\bibinfo {author} {\bibnamefont {Andr{\'e}ka},
  \bibfnamefont {H}}, \bibinfo {author} {\bibfnamefont {J.~X.}\ \bibnamefont
  {Madar{\'a}sz}}, \bibinfo {author} {\bibfnamefont {I.}~\bibnamefont
  {N{\'e}meti}}, \ and\ \bibinfo {author} {\bibfnamefont {G.}~\bibnamefont
  {Sz{\'e}kely}}} (\bibinfo {year} {2008}),\ \bibfield  {title} {\enquote
  {\bibinfo {title} {Axiomatizing relativistic dynamics without conservation
  postulates},}\ }\href@noop {} {\bibfield  {journal} {\bibinfo  {journal}
  {Studia Logica}\ }\textbf {\bibinfo {volume} {89}}~(\bibinfo {number} {2}),\
  \bibinfo {pages} {163--186}}\BibitemShut {NoStop}%
\bibitem [{\citenamefont {Bellet{\^e}te}\ and\ \citenamefont
  {Paranjape}(2013)}]{BP13}%
  \BibitemOpen
  \bibfield  {author} {\bibinfo {author} {\bibnamefont {Bellet{\^e}te},
  \bibfnamefont {J}}, \ and\ \bibinfo {author} {\bibfnamefont {M.~B.}\
  \bibnamefont {Paranjape}}} (\bibinfo {year} {2013}),\ \href@noop {} {\enquote
  {\bibinfo {title} {On negative mass},}\ }\bibinfo {howpublished}
  {arXiv:1304.1566 [gr-qc]}\BibitemShut {NoStop}%
\bibitem [{\citenamefont {Bondi}(1957)}]{Bondi57}%
  \BibitemOpen
  \bibfield  {author} {\bibinfo {author} {\bibnamefont {Bondi}, \bibfnamefont
  {H}}} (\bibinfo {year} {1957}),\ \bibfield  {title} {\enquote {\bibinfo
  {title} {{Negative Mass in General Relativity}},}\ }\href@noop {} {\bibfield
  {journal} {\bibinfo  {journal} {Reviews of Modern Physics}\ }\textbf
  {\bibinfo {volume} {29}}~(\bibinfo {number} {3})}\BibitemShut {NoStop}%
\bibitem [{\citenamefont {Jammer}(2000)}]{Jam00}%
  \BibitemOpen
  \bibfield  {author} {\bibinfo {author} {\bibnamefont {Jammer}, \bibfnamefont
  {M}}} (\bibinfo {year} {2000}),\ \href@noop {} {\emph {\bibinfo {title}
  {Concepts of {M}ass in {C}ontemporary {P}hysics and {P}hilosophy}}}\
  (\bibinfo  {publisher} {Princeton University Press},\ \bibinfo {address}
  {Princeton, NJ})\BibitemShut {NoStop}%
\bibitem [{\citenamefont {Madar\'asz}\ \emph {et~al.}(2014)\citenamefont
  {Madar\'asz}, \citenamefont {Stannett},\ and\ \citenamefont
  {Sz\'ekely}}]{MSS14}%
  \BibitemOpen
  \bibfield  {author} {\bibinfo {author} {\bibnamefont {Madar\'asz},
  \bibfnamefont {J~X}}, \bibinfo {author} {\bibfnamefont {M.}~\bibnamefont
  {Stannett}}, \ and\ \bibinfo {author} {\bibfnamefont {G.}~\bibnamefont
  {Sz\'ekely}}} (\bibinfo {year} {2014}),\ \bibfield  {title} {\enquote
  {\bibinfo {title} {{Why do the Masses and Momenta of Superluminal Particles
  Decrease as their Velocities Increase}},}\ }\href@noop {} {\bibfield
  {journal} {\bibinfo  {journal} {SIGMA}\ }\textbf {\bibinfo {volume} {10}},\
  \bibinfo {pages} {21}},\ \bibinfo {note} {arXiv:1309.3713
  [gr-qc]}\BibitemShut {NoStop}%
\bibitem [{\citenamefont {Madar\'asz}\ and\ \citenamefont
  {Sz\'ekely}(2014)}]{FTLconsSRDyn}%
  \BibitemOpen
  \bibfield  {author} {\bibinfo {author} {\bibnamefont {Madar\'asz},
  \bibfnamefont {J~X}}, \ and\ \bibinfo {author} {\bibfnamefont
  {G.}~\bibnamefont {Sz\'ekely}}} (\bibinfo {year} {2014}),\ \bibfield  {title}
  {\enquote {\bibinfo {title} {The existence of superluminal particles is
  consistent with relativistic dynamics},}\ }\href@noop {} {\bibfield
  {journal} {\bibinfo  {journal} {Journal of Applied Logic}\ }}\bibinfo {note}
  {In press}\BibitemShut {NoStop}%
\bibitem [{\citenamefont {Moln{\'a}r}\ and\ \citenamefont
  {Sz{\'e}kely}(2013)}]{MoSz13}%
  \BibitemOpen
  \bibfield  {author} {\bibinfo {author} {\bibnamefont {Moln{\'a}r},
  \bibfnamefont {A}}, \ and\ \bibinfo {author} {\bibfnamefont {G.}~\bibnamefont
  {Sz{\'e}kely}}} (\bibinfo {year} {2013}),\ \href@noop {} {\enquote {\bibinfo
  {title} {{Axiomatizing Relativistic Dynamics using Formal Thought
  Experiments}},}\ }\bibinfo {howpublished}
  {\url{http://philsci-archive.pitt.edu/9914/}}\BibitemShut {NoStop}%
\bibitem [{\citenamefont {Stannett}\ and\ \citenamefont
  {N\'emeti}(2014)}]{SN13}%
  \BibitemOpen
  \bibfield  {author} {\bibinfo {author} {\bibnamefont {Stannett},
  \bibfnamefont {M}}, \ and\ \bibinfo {author} {\bibfnamefont {I.}~\bibnamefont
  {N\'emeti}}} (\bibinfo {year} {2014}),\ \bibfield  {title} {\enquote
  {\bibinfo {title} {{Using Isabelle/HOL to Verify First-Order Relativity
  Theory}},}\ }\href {\doibase 10.1007/s10817-013-9292-7} {\bibfield  {journal}
  {\bibinfo  {journal} {Journal of Automated Reasoning}\ }\textbf {\bibinfo
  {volume} {52}},\ \bibinfo {pages} {361--378}}\BibitemShut {NoStop}%
\bibitem [{\citenamefont {Wenzel}(2012)}]{Isabelle}%
  \BibitemOpen
  \bibfield  {author} {\bibinfo {author} {\bibnamefont {Wenzel}, \bibfnamefont
  {M}}} (\bibinfo {year} {2012}),\ \href@noop {} {\enquote {\bibinfo {title}
  {{The Isabelle/Isar Reference Manual}},}\ }\bibinfo {howpublished}
  {\url{http://isabelle.in.tum.de/dist/Isabelle2012/doc/isar-ref.pdf}}\BibitemShut
  {NoStop}%
\end{thebibliography}%

\end{document}